\newtheorem{thm}{Theorem}
\def\BibTeX{{\rm B\kern-.05em{\sc i\kern-.025em b}\kern-.08em
		T\kern-.1667em\lower.7ex\hbox{E}\kern-.125emX}}
\begin{document}
	
	\title{Low-Complexity Integer Divider Architecture for Homomorphic Encryption
		\thanks{This work is supported in part by Cisco Systems.}
	 }
	\author{\IEEEauthorblockN{Sajjad Akherati, Jiaxuan Cai, and Xinmiao Zhang}
	 	\IEEEauthorblockA{The Ohio State University, OH 43210, U.S.
	 }}
	
	\maketitle
	\begin{abstract}
		
		Homomorphic encryption (HE) allows computations to be directly carried out on ciphertexts and enables privacy-preserving cloud computing. The computations on the coefficients of the polynomials involved in HE are always followed by modular reduction, and the overall complexity of ciphertext multiplication can be reduced by utilizing the quotient. Our previous design considers the cases that the dividend is an integer multiple of the modulus and the modulus is in the format of $2^w-2^u\pm1$, where $u<w/2$. In this paper, the division is generalized for larger $u$ and dividend not an integer multiple of the modulus. An algorithm is proposed to compute the quotient and vigorous mathematical proofs are provided. Moreover, efficient hardware architecture is developed for implementing the proposed algorithm. Compared to alternative division approaches that utilize the inverse of the divisor, for $w=32$, the proposed design achieves at least 9\% shorter latency and 79\% area reduction for 75\% possible values of $u$.
	\end{abstract}
	
	
	\section{Introduction} 
	Homomorphic encryption (HE) allows computations to be carried out on ciphertexts without decryption. It is the key to preserving privacy in cloud computing. Popular HE schemes \cite{BGV, BFV, CKKS} involve computations over very long polynomials, whose coefficients are large integers, and coefficients multiplication and addition are followed by modular reduction. To reduce the computation complexity, the large coefficients are represented by residue number system \cite{RNSCKKS}, and moduli with a small number of nonzero bits, such as in the format of $q=2^w-2^u\pm 1$, are chosen. It was found in \cite{HE} that the overall complexity of ciphertext multiplication can be reduced by combining and reformulating the coefficients multiplication and relinearization, which is enabled by using the quotient of dividing coefficients product by $q$.
	
	The division can be implemented by using a look-up table \cite{lokuptab}. However, the size of the look-up table increases exponentially with the number of bits to divide in each clock cycle. Approximate division by very short integers, such as 8-bit, has been investigated for image processing in \cite{Canny, edgeDet}. The approximations in these schemes lead to a big difference in the quotient for large $q$. 
	In \cite{recdiv}, the quotient is derived by multiplying an approximation of $q^{-1}$. To improve the precision, $2w$ bits are used to represent the approximation when $q$ has $w$ bits. The dividend is a product of two $w$-bit coefficients and also has $2w$ bits. Hence, a $2w\times 2w$ wide multiplier is needed, and it leads to not only a long data path but also a large area. To address these issues, the quotient is calculated as $a\times \lambda+b$, and then the least significant bits are deleted in \cite{cra}. Here $a$ and $b$ are precomputed constants with at most $w$ bits. Although the width of the multiplicand is reduced, a wide multiplier is still needed for this design.

The design in \cite{HE} assumes that $q=2^w-2^u\pm1$. Utilizing the property that $q$ has a small number of nonzero bits, the quotient is calculated by addition and shift operations that have much shorter data path and smaller area requirements compared to those multiplying approximation of $q^{-1}$ as in \cite{recdiv,cra}. However, the design in \cite{HE} is limited to the case of $u<w/2$, and the dividend is an integer multiple of $q$. Given the product of two coefficients, its remainder of division by $q$ needs to be calculated and subtracted first before the division can be carried out. 

This paper proposes a generalized low-complexity integer division algorithm and implementation architecture. An iterative process is developed to compute the quotient in the case of $u\geq w/2$, where each iteration consists of simple addition and shift operations. The number of iterations needed is a small value depending on the ratio of $u/w$. Mathematical formulas and corresponding proofs are given for the number of iterations. Unlike the algorithm in \cite{HE}, the proposed design does not require the dividend to be an integer multiple of $q$ and hence does not need a separate remainder calculation. Instead, the quotient computed from the iterative process is adjusted to take into account the remainder. Efficient hardware implementation architectures are also developed for the proposed algorithm and synthesis has been carried out. For $w=32$, there are 31 different possible $u$. For 50\%, 16\%, and 9\% of these possible values of $u$, the proposed design achieves 55\%, 32\%, and 9\% shorter latency, respectively, and at least 79\% silicon area reduction compared to the divider in \cite{cra}. 
	
\section{Existing Integer Dividers}
Consider $\lambda/q$ where $\lambda$ and $q$ have $2w$ and $w$ bits, respectively. In \cite{recdiv}, the quotient of $\lambda/q$ is calculated by multiplying $\lambda$ with a pre-computed constant $J=\lfloor2^{3w-1}/q\rfloor+1$. The higher $w$ bits of this product is the quotient of $\lambda/q$. $J$ has $2w$ bits in order to ensure the correctness of the quotient. The design in \cite{cra} calculates the quotient as $\lfloor(a\lambda+b)/2^k\rfloor$, where $a$, $b$ and $k$ are constants pre-computed according to $q$. $a$ is an ($k-w$)-bit number, where $w<k\leq 2w$. Consequently, the width of the multiplier needed is reduced. However, the quotient computed from this algorithm can be different from the actual quotient by $\pm 1$. 
	
The design in \cite{HE} assumes $q=2^w-2^u\pm1$, $u< w/2$, and $\lambda$ is an integer multiple of $q$. Let $c=\lfloor\lambda/2^w\rfloor$. It first calculates $b^* = c - \lfloor\frac{-c(2^u\mp 1)}{2^w}\rfloor$. It was shown that $\lambda/q$ equals $b^*$ when $b^*$ and $\lambda$ are both even or odd. Otherwise, $\lambda/q=b^*+1$. As a result, the division was implemented by two adders and two shifters. 

	\section{Generalized Low-Complexity Integer Division }
This section first generalizes the previous division algorithm in \cite{HE} to the case of $q=2^w-2^u\pm1$ with $u\geq w/2$. Then the design is further extended to the case that $\lambda$ is not an integer multiple of $q$. 

\subsection{Extension for $u\geq w/2$ with $\lambda$ as an integer multiple of $q$}
Denote the quotient of $\lambda/q$ by $b$ ($b\in Z^+$). Replace $q$ by $2^w-2^u\pm1$ in $\lambda = bq$. Then $c=\lfloor\lambda/2^w\rfloor$ can be rewritten as $c=\lfloor\frac{bq}{2^w}\rfloor = b + \lfloor(-1)\frac{b(2^u\mp1)}{2^w}\rfloor$.
Define 
\begin{equation}\label{fx}
f(x) \triangleq x+v(x),
\end{equation}
where 
\begin{equation}\label{vx}
v(x) \triangleq \lfloor(-1)\frac{x(2^u\mp1)}{2^w}\rfloor.
\end{equation}
$b$ is the solution of $f(x)=c$. This solution can be found iteratively using Algorithm \ref{alg}. In this algorithm, the least significant bit (LSB) of a number $y$ is denoted by $LSB(y)$. 
	\begin{algorithm}[H]
		\caption{Algorithm for calculating $b=\lambda/q$ ($\lambda$ is an integer multiple of $q$)} \label{alg}
		\hspace*{\algorithmicindent} \textbf{Input:} $\lambda$, $q=2^w-2^u\pm1$, $w$, $u$
		\begin{algorithmic}[1]
			\State $c\gets \lfloor\frac{\lambda}{2^w}\rfloor$; $i\gets 0$; $b_0 \gets c$;
			\While{$f(b_i)\neq c$}
			\State $b_{i+1}\gets b_i + (c-f(b_i))$; $i\gets i+1$; $b^*\gets b_i$;
			\EndWhile
			\State $b\gets b^* + (LSB(\lambda)\ \text{XOR}\ LSB(b^*))$;
			\State \Return $b$;
		\end{algorithmic}
	\end{algorithm}
	
In the following, Theorem 1 proves that the loop in Algorithm 1 will terminate with a finite number of iterations. Theorem 2 connects the $b^*$ computed from the loop with the actual $b$ value. The number of iterations needed in Algorithm 1 is given through Theorem 3 and 4.

\begin{thm}\label{t1} Algorithm \ref{alg} will terminate with a finite number of iterations.
	\end{thm}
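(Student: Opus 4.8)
The plan is to recognize the loop as a fixed-point iteration of a monotone integer map and to sandwich the iterates between their starting value and the true quotient. First I would simplify the update rule: since $f(b_i)=b_i+v(b_i)$, the assignment $b_{i+1}\gets b_i+(c-f(b_i))$ collapses to $b_{i+1}=c-v(b_i)$. Writing $g(x)\triangleq c-v(x)$ and using the identity $\lfloor -y\rfloor=-\lceil y\rceil$ to rewrite $v(x)=-\lceil x(2^u\mp1)/2^w\rceil$, the iteration becomes simply $b_{i+1}=g(b_i)=c+\lceil b_i(2^u\mp1)/2^w\rceil$ with $b_0=c$. Crucially, the loop's exit condition $f(b_i)=c$ is exactly the fixed-point condition $g(b_i)=b_i$, so termination is equivalent to the iteration reaching a fixed point.

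Next I would establish that $g$ is non-decreasing, which is immediate because the ceiling is non-decreasing and $2^u\mp1>0$. From the derivation preceding the theorem, the true quotient $b$ satisfies $f(b)=c$, and hence $b=g(b)$ is a fixed point of $g$. The core of the argument is then a two-sided monotonicity bound proved by induction: (i) the sequence $\{b_i\}$ is non-decreasing, since $b_1=g(c)=c+\lceil c(2^u\mp1)/2^w\rceil\ge c=b_0$, and because $g$ is non-decreasing, $b_i\le b_{i+1}$ propagates to $b_{i+1}=g(b_i)\le g(b_{i+1})=b_{i+2}$; and (ii) every iterate is bounded above by $b$, since $b_0=c\le b$ (as $\lceil b(2^u\mp1)/2^w\rceil\ge0$ for $b\in Z^+$), and $b_i\le b$ forces $b_{i+1}=g(b_i)\le g(b)=b$.

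Finally I would conclude: $\{b_i\}$ is a non-decreasing sequence of integers bounded above by $b$, so it can strictly increase only finitely many times and must become constant after finitely many steps. At the index where $b_{i+1}=b_i$, the fixed-point (equivalently, termination) condition $f(b_i)=c$ holds, and the while loop exits; note the limit is merely \emph{some} fixed point $\le b$, whose precise relation to the actual quotient is deferred to Theorem 2.

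The step I expect to be the main obstacle is pinning down the upper bound cleanly, since it relies on $b$ genuinely being a fixed point of $g$ and on $c\le b$, both of which require careful bookkeeping of the floor/ceiling conversions and of the $\pm/\mp$ sign conventions induced by $q=2^w-2^u\pm1$. Once those conversions are handled, the monotonicity and integrality arguments that deliver finite termination are routine.
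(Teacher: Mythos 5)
Your proof is correct, but it takes a genuinely different route from the paper's. The paper tracks the sequence $f(b_0),f(b_1),\dots$ and shows, via the subadditivity-type estimate $f(x)+f(y)\le f(x+y)\le f(x)+f(y)+1$ together with $0<f(x)<x$, that $f(b_i)$ is a strictly increasing integer sequence bounded above by $c$, hence must hit $c$. You instead track the iterates $b_i$ themselves, recasting Line 3 as the fixed-point iteration $b_{i+1}=g(b_i)$ with $g(x)=c+\lceil x(2^u\mp1)/2^w\rceil$ monotone, observing that the loop guard $f(b_i)\ne c$ is exactly $g(b_i)\ne b_i$, and sandwiching $b_0=c\le b_i\le b$ by induction using that $b$ is a fixed point of $g$. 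Both arguments are sound; note that your statement $b_{i+1}\ge b_i$ is equivalent to the paper's $f(b_i)\le c$, so the two inductions are proving equivalent invariants by different means. Your version buys two things: it dispenses with the subadditivity lemma entirely (the only facts needed are that $\lceil\cdot\rceil$ is monotone and nonnegative on nonnegative arguments), and the upper bound $b_i\le b$ you establish along the way is precisely the paper's Theorem~2, so one induction yields both results; it also gives the explicit iteration bound $b-c$ for free. The paper's version has the modest advantage of working directly with the loop-guard quantity $f(b_i)$ and not requiring the observation that $b$ is a fixed point. One presentational nit: in the algorithm the loop exits \emph{before} computing a repeated iterate, so rather than saying the sequence ``becomes constant,'' it is cleaner to say that while the guard holds, $f(b_i)<c$ forces $b_{i+1}>b_i$ strictly, and a strictly increasing integer sequence confined to $[c,b]$ can take at most $b-c$ steps.
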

	\begin{proof} 
By induction, it is shown in the following that $f(b_{i-1})<f(b_{i})\leq c$. 	Using the properties of the floor function, it can be derived that $f(x)+f(y) \leq f(x+y) \leq f(x)+f(y)+1$. Besides, from \eqref{vx}, it can be easily seen that $0< f(x) < x$ for $x>0$. Hence, for $i=1$:
		\begin{align*}
			f(b_1)\!=\! f(b_0\!+\!c\!-\!f(b_0)) 
			&\!\geq \!f(b_0)\!+\! f(c\!-\!f(b_0))\\&\!=\!f(b_0)\!+\!f(c\!-\!f(c))\!>\!f(b_0)
		\end{align*}
Similarly, \small
		\begin{align*}
			f(b_1)\!\leq \!f(b_0)\!+\!f(c\!-\!f(b_0)) \!+\! 1\!<\! f(b_0)\!+\!(c\!-\!f(b_0)) \!+\! 1 \!=\!c\!+\!1.
		\end{align*}\normalsize
Since $f(b_1)$ is integer, $f(b_1)\leq c$. Assume that $f(b_{i-2})<f(b_{i-1})\leq c$, similar derivations as in the above two formulas show that $f(b_{i-1})<f(b_{i})\leq c$. Since $f(b_i)$ strictly increases with iteration $i$ and it is always an integer not exceeding $c$, $f(b_i)$ will equal to $c$ in an iteration and the loop in Lines 2 and 3 of Algorithm 1 terminates.
\end{proof}
	
	\begin{thm}\label{t2}
The $b_i$ in every iteration of Algorithm \ref{alg} is at most $b$.
	\end{thm}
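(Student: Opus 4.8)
The plan is to prove the claim by induction on the iteration index $i$, exploiting the fact that the target value $b$ is itself an exact solution of $f(x)=c$: since $\lambda=bq$, the derivation preceding the algorithm already shows $c=b+v(b)=f(b)$. For the base case, I would observe that $b_0=c=f(b)$ and invoke the bound $0<f(x)<x$ (established in the proof of Theorem \ref{t1}) at $x=b$ to get $b_0=f(b)<b$, so in particular $b_0\le b$.

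For the inductive step, I assume $b_i\le b$ and rewrite the update rule using $c=f(b)$, obtaining $b_{i+1}=b_i+(c-f(b_i))=b_i+\bigl(f(b)-f(b_i)\bigr)$; it therefore suffices to show $f(b)-f(b_i)\le b-b_i$. Setting $d=b-b_i\ge 0$ and applying the upper half of the subadditivity property $f(x+y)\le f(x)+f(y)+1$ with $x=b_i$ and $y=d$ gives $f(b)\le f(b_i)+f(d)+1$. Because $f(d)<d$ with both quantities integers, I can sharpen this to $f(d)\le d-1$, whence $f(b)\le f(b_i)+d$, i.e. $f(b)-f(b_i)\le d=b-b_i$. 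Substituting back yields $b_{i+1}\le b_i+(b-b_i)=b$, closing the induction. The degenerate case $d=0$ should be noted separately, since there $f(d)=0$ and the desired inequality holds trivially as an equality.

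The main obstacle I anticipate is using the correct direction of the subadditivity estimate: the argument needs the upper bound $f(x+y)\le f(x)+f(y)+1$ rather than the lower one, and it hinges on upgrading the strict inequality $f(d)<d$ to $f(d)\le d-1$ through integrality. Combining these two ingredients into the single clean bound $f(b)-f(b_i)\le b-b_i$ is the heart of the proof; everything else is routine bookkeeping built on the properties of $f$ already assembled for Theorem \ref{t1}.
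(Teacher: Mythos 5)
Your proof is correct. The overall structure --- induction on $i$, with base case $b_0=c\le b$ and an inductive step bounding the increment $c-f(b_i)$ --- matches the paper's, but you justify the key inequality by a different route. The paper observes that the $b_i$ terms cancel in the update, $b_{i+1}=b_i+c-f(b_i)=c-v(b_i)=b+v(b)-v(b_i)$, and then concludes directly from the monotonicity of $v$ (which is immediate from \eqref{vx}, since $-x(2^u\mp1)$ decreases in $x$ and the floor is monotone) that $v(b)\le v(b_i)$, hence $b_{i+1}\le b$. You instead establish the equivalent bound $f(b)-f(b_i)\le b-b_i$ by combining the subadditivity estimate $f(x+y)\le f(x)+f(y)+1$ with the integrality sharpening $f(d)\le d-1$ for $d=b-b_i>0$. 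Since $f(b)-f(b_i)=(b-b_i)+(v(b)-v(b_i))$, your inequality is literally the same fact as the paper's; your derivation is a bit more roundabout but has the merit of reusing exactly the two properties of $f$ already assembled for Theorem~\ref{t1}, rather than introducing monotonicity of $v$ as an additional ingredient. One cosmetic caution: the claim $0<f(x)$ quoted from Theorem~\ref{t1} actually fails at $x=1$ (where $f(1)=0$), but you only use the upper half $f(x)<x$, so nothing in your argument is affected.
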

	\begin{proof} This is proved by induction. Since $c=b+v(b)$ and $v(b)\leq 0$, it is clear that $b_0=c\leq b$. Now suppose that $b_i\leq b$. From Algorithm \ref{alg}, if $b_i=b$, then the loop terminates and there will be no other $b_i'$ with $i'>i$. If $b_i<b$, then $v(b)\leq v(b_i)$ from \eqref{vx}. Accordingly,
		\begin{align*}
			b_{i+1} &= b_{i} + c -f(b_{i}) = b_{i} +c - (b_{i}+v(b_{i}))\\
			&= c-v(b_{i})= b+v(b)-v(b_{i})\leq b.
		\end{align*}
\end{proof}
Theorem \ref{t2} shows that the $b^*$ calculated by the loop in Algorithm 1 does not exceed $b$. Next, it will be shown that $b^*$ equals either $b-1$ or $b$. From \eqref{fx}, $f(b)=b+v(b)$ and $f(b-1)=b-1+v(b-1)$. Since $v(b-1)=v(b)$ or $v(b)+1$, $f(b-1) = f(b)$ or $f(b)-1$. If $f(b-1)=f(b)=c$ then the $b^*$ from the loop in Algorithm \ref{alg} may be either $b-1$ or $b$. Similarly, it can be shown that $f(b-2)=f(b)-1 \text{ or }f(b)-2$. Hence, $f(b-2)\neq c$ and $b^*$ can not be $b-2$ or a smaller value. Since $q$ is an odd number, $b$ is even or odd when $\lambda$ is even or odd, respectively. Hence, whether $b$ equals $b^*$ or $b^*+1$ can be decided by using the LSB of $\lambda$ and $b^*$ as listed in Line 4 of Algorithm \ref{alg}. 

The initial value of $b_0$ should be less than $b$, which is unknown at the beginning of Algorithm 1. On the other hand, initializing $b_0$ to the smallest positive integer, 1, leads to more iterations in Algorithm 1. Since $c\leq b$, $c$ is chosen as the initial value in our algorithm. With this initial value, $c-f(b_i)$ is the largest value that can be used to update $f(b_i)$ as proposed in Algorithm 1. Consider the case that $b=2$ and $u<w-2$. Then $c=1$ and  $f(b_0) = f(c) = f(1) = 0$. If $c-f(b_0)+j$ ($j\geq 1$)  is used to update $b_i$. Then $b_1 = b_0 + (c-f(b_0)+j) \geq 3$ and $f(3)=2>c$. Hence, Algorithm \ref{alg} will not terminate and $c-f(b_i)+j$ with $j\geq 1$ can not be used to update $f(b_i)$.

	\begin{thm}\label{t3}
Suppose that the number of iterations to calculate $b=\frac{\lambda}{q}$ ($\lambda<2^{2w}-1$) and $b_{\text{MAX}}\triangleq \lfloor\frac{2^{2w}-1}{q}\rfloor$ from Algorithm \ref{alg} are $N$ and $N_{\text{MAX}}$, respectively. Then $N\leq N_{\text{MAX}}$.
	\end{thm}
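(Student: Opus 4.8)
The plan is to track the gap between the true quotient $b$ and the iterate $b_i$, and to show that this gap shrinks at least as fast for $b$ as it does for $b_{\text{MAX}}$. By Theorem \ref{t2} we have $b_i\le b$ throughout, so I would set $d_i\triangleq b-b_i\ge 0$. Writing $\phi(x)\triangleq -v(x)=\lceil x(2^u\mp 1)/2^w\rceil$, equation \reff{vx} gives $c=f(b)=b-\phi(b)$, the update in Line 3 becomes $b_{i+1}=c+\phi(b_i)=b-\phi(b)+\phi(b_i)$, and hence $d_{i+1}=\phi(b)-\phi(b-d_i)$ with $d_0=\phi(b)$. The loop terminates exactly when $d_i$ settles at its stable terminal value ($0$ or $1$, matching the fact shown after Theorem \ref{t2} that $b^*\in\{b-1,b\}$), so $N$ is the index at which this gap sequence first stabilizes.

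First I would record two monotonicity facts. Since $\phi$ is non-decreasing and $b\le b_{\text{MAX}}$ (which follows from $\lambda<2^{2w}-1$ and $\lambda=bq$), the initial gaps obey $d_0=\phi(b)\le\phi(b_{\text{MAX}})$. Moreover, using the form $b_{i+1}=c+\phi(b_i)$ together with the fact that $f$ is non-decreasing (so the constant $c=f(b)$ for $b$ does not exceed the one for $b_{\text{MAX}}$), a one-line induction shows the whole iterate sequences are ordered, $b_i\le b_i^{\text{MAX}}$ at every step.

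The core step is to convert these facts into $N\le N_{\text{MAX}}$. The tool I would use is the inequality $\lceil x\rceil-\lceil y\rceil\le\lceil x-y\rceil$, which applied to the gap recursion yields the position-free bound $d_{i+1}\le\lceil (2^u\mp 1)d_i/2^w\rceil$; since $(2^u\mp 1)/2^w<1$, this forces the gap to contract geometrically toward its terminal value. Combining this contraction with the initial ordering $d_0\le d_0^{\text{MAX}}$ and the iterate ordering $b_i\le b_i^{\text{MAX}}$, I would argue that the gap for $b$ cannot stabilize later than the gap for $b_{\text{MAX}}$, which is precisely $N\le N_{\text{MAX}}$.

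The hard part will be this last step, because the ceilings make the per-iteration gap genuinely position-dependent: a sliding interval of a given length can contain a number of integers that differs by $\pm 1$ as the interval moves, so a naive term-by-term domination $d_i\le d_i^{\text{MAX}}$ can fail by one near the end of the process. The crux is therefore to find a comparison robust to these $\pm 1$ fluctuations, for instance by comparing stabilization times rather than raw gap values, and by treating the terminal regime (gap $0$ or $1$) separately, exploiting that each unit increase of the argument changes $\phi$ by at most $1$. I expect the case analysis in this terminal regime, rather than the geometric bulk of the contraction, to carry essentially all of the difficulty.
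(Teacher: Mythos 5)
Your setup coincides with the paper's: both arguments track the gaps $b-b_i$ and $b_{\text{MAX}}-b_{\text{MAX}_i}$ and compare the two runs of the loop. The genuine gap is that you stop exactly at the step that constitutes the whole proof. The ``naive term-by-term domination'' $b-b_i\le b_{\text{MAX}}-b_{\text{MAX}_i}$, which you conjecture ``can fail by one near the end,'' is precisely what the paper proves, by induction, and it does not fail. The base case uses $b-b_0=b-\lfloor bq/2^w\rfloor< b(1-q/2^w)+1\le b_{\text{MAX}}(1-q/2^w)+1\le b_{\text{MAX}}-\lfloor b_{\text{MAX}}q/2^w\rfloor+1$ together with integrality. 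For the inductive step one writes $b-b_{i+1}=b-c+v(b_i)$ and bounds it above by $(b-b_i)\frac{2^u\mp1}{2^w}+1$, using $\lfloor x\rfloor>x-1$ exactly once (for $c$) and $\lfloor x\rfloor\le x$ (for $v(b_i)$), so that strictly less than one unit of floor-induced slack enters; applying the induction hypothesis and rewriting $(b_{\text{MAX}}-b_{\text{MAX}_i})\frac{2^u\mp1}{2^w}+1$ back in terms of $c_{\text{MAX}}$ and $v(b_{\text{MAX}_i})$ plus a fractional correction lying strictly in $(0,1)$, the integrality of $b-b_{i+1}$ absorbs that correction and gives $b-b_{i+1}\le b_{\text{MAX}}-b_{\text{MAX}_{i+1}}$. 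The $\pm1$ fluctuation you fear is controlled because the comparison is carried out over the reals with an explicit budget of less than one unit, and the floor is re-imposed only once per step.

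Your substitute tools cannot close this on their own. The self-contraction $d_{i+1}\le\lceil(2^u\mp1)d_i/2^w\rceil$ bounds each sequence in terms of itself but never couples the two runs, and the iterate ordering $b_i\le b_{\text{MAX}_i}$ says nothing about the ordering of the gaps, since both $b_i$ and $b$ are smaller on the $\lambda$ side. The ``comparison robust to $\pm1$ fluctuations'' that you defer to future case analysis is the theorem's content, so as written the proposal contains no proof. The repair is to establish the gap domination by the real-valued induction above; the passage to $N\le N_{\text{MAX}}$ then follows as in the paper: if the MAX-run's gap were already zero at iteration $N-1$, domination plus Theorem \ref{t2} would force $b_{N-1}=b$, hence $f(b_{N-1})=c$ and termination of the $\lambda$-run at iteration $N-1$, contradicting the definition of $N$.
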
  
	\begin{proof} Let $c_\text{MAX}=\lfloor (2^{2w}-1)/2^w\rfloor$ and $b_{{\text{MAX}}_i}$ be the intermediate value at iteration $i$ of Algorithm 1 for calculating $b_{\text{MAX}}$. It is shown below that $b-b_i \leq b_{\text{MAX}}-b_{{\text{MAX}}_i}$ by induction. It can be derived that
 \small
		\begin{align*}
			b-b_0 &= b - \lfloor\frac{bq}{2^w}\rfloor < b-\frac{bq}{2^w}+1=b(1-\frac{q}{2^w})+1\\&\leq b_\text{MAX}(1-\frac{q}{2^w})+1 = b_\text{MAX}-\frac{b_\text{MAX}q}{2^w}+1 \\&
< b_\text{MAX}-\lfloor\frac{b_\text{MAX}q}{2^w}\rfloor+1 = b_\text{MAX}-b_{{\text{MAX}}_0}+1.
		\end{align*} \normalsize
$b$ and $b_0$ are integers, and hence $b-b_0\leq b_\text{MAX}-b_{{\text{MAX}}_0}$. 

Suppose that for iteration $i < N$, $b-b_i \leq b_\text{MAX}-b_{{\text{MAX}}_i}$. For $i+1<N$
		\small
		\begin{align*}
			b-b_{i+1} &= b-(b_i+c-f(b_i))= b-c+v(b_i)\\&< b-c-\frac{b_i(2^u\mp1)}{2^w}< b-\frac{bq}{2^w}-\frac{b_i(2^u\mp1)}{2^w}+1 \\&= (b-b_i)\frac{2^u\mp1}{2^w}+1\leq (b_\text{MAX}-b_{\text{MAX}_i})\frac{2^u\mp1}{2^w}+1 \\&= b_{\text{MAX}}-\frac{b_{\text{MAX}}q}{2^w}-\frac{b_{\text{MAX}_i}(2^u\mp1)}{2^w}+1\\
   &= b_{\text{MAX}}-\lfloor\frac{b_{\text{MAX}}q}{2^w}\rfloor+\lfloor(-1)\frac{b_{\text{MAX}_i}(2^u\mp1)}{2^w}\rfloor\\&\quad+1 + \frac{-[b_{\text{MAX}}q]_{2^w}+[-b_{\text{MAX}_i}(2^u\mp1)]_{2^w}}{2^w},
		\end{align*}
		\normalsize
where $[\cdot]_{2^w}$ denotes the remainder of dividing by $2^w$. It can be shown that $0\!<\!1\! + \!\frac{-[b_{\text{MAX}}q]_{2^w}+[-b_{\text{MAX}_i}(2^u\mp1)]_{2^w}}{2^w}\! <\! 1$. Since $b-b_{i+1}$ is an integer,
\small
\begin{align*}
    b-b_{i+1} &\leq  b_\text{MAX}-c_\text{MAX}+v(b_{\text{MAX}_i}) = b_\text{MAX}-b_{\text{MAX}_{i+1}}.
\end{align*}
\normalsize

Algorithm 1 terminates at iteration $N$ when $\lambda$ is the input. From previous analysis, $b_N$ equals either $b$ or $b-1$. From the above proof, $b_{\text{MAX}}-b_{\text{MAX}_N} \geq b-b_N$. Hence $b_{\text{MAX}}-b_{\text{MAX}_N}\geq 0$. On the other hand, $b_{\text{MAX}}-b_{\text{MAX}_{N-1}}\neq 0$. Otherwise, the algorithm with $\lambda$ as the input already terminates at iteration $N-1$ and this contradicts the assumption. Accordingly, when the input of Algorithm 1 is $2^{2w-1}$, it may need more than $N$ iterations to compute $b_\text{MAX}$ and hence $N_{\text{MAX}}\geq N$.  

	\end{proof} 
	\begin{thm}\label{t4}
	For $q = 2^w-2^u\pm1$, let $t$ be the integer such that $tu>(t-1)w$ and $(t+1)u\leq tw$. Then $N_{\text{MAX}}=t$. 
	\end{thm}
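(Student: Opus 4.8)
The plan is to track the error $e_i \triangleq b_{\text{MAX}}-b_{{\text{MAX}}_i}$ and show it first reaches its terminal value after exactly $t$ iterations. First I would fix the input: computing $b_{\text{MAX}}$ means running Algorithm \ref{alg} on $\lambda=b_{\text{MAX}}q$, the largest multiple of $q$ below $2^{2w}$, so that $c=f(b_{\text{MAX}})$ and the sought quotient is $b=b_{\text{MAX}}$. Writing $s\triangleq 2^u\mp1$ and $r\triangleq s/2^w\in(0,1)$, note $1-r=q/2^w$ and, from \eqref{vx}, $v(x)=-\lceil xr\rceil$. By Theorem \ref{t1} the sequence $f(b_{{\text{MAX}}_i})$ strictly increases to $c$, i.e. $e_i$ strictly decreases until the loop stalls, and by the discussion following Theorem \ref{t2} the terminal value $b^*$ is either $b-1$ or $b$; hence $N_{\text{MAX}}$ is the index at which $e_i$ first reaches this stalled value. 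The whole argument therefore reduces to locating, as a function of $u$ and $w$, the iteration at which the error crosses down through $1$.

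Next I would establish the error recurrence. Since $b_{{\text{MAX}}_{i+1}}=c-v(b_{{\text{MAX}}_i})=b_{\text{MAX}}-\lceil b_{\text{MAX}}r\rceil+\lceil b_{{\text{MAX}}_i}r\rceil$, one obtains the exact identity $e_{i+1}=\lceil b_{\text{MAX}}r\rceil-\lceil(b_{\text{MAX}}-e_i)r\rceil$, and subadditivity of $\lceil\cdot\rceil$ then yields the clean two-sided bound $\lfloor e_ir\rfloor\le e_{i+1}\le\lceil e_ir\rceil$, together with $e_0=\lceil b_{\text{MAX}}r\rceil$. Because $b_{\text{MAX}}r$ is close to $s/(1-r)\approx 2^u$, this recurrence multiplies the error by roughly $r\approx2^{u-w}$ each step, so heuristically $e_i\approx 2^u r^i=2^{(i+1)u-iw}$. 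This quantity is the first neglected term of the geometric expansion $1/q=2^{-w}\sum_{k\ge0}r^k$, and it exceeds $1$ exactly while $(i+1)u\ge iw$ and falls below $1$ once $(i+1)u<iw$ — precisely the crossing encoded by the two inequalities defining $t$.

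To turn this into bounds I would prove the two halves separately. For $N_{\text{MAX}}\le t$, iterate the upper bound $e_{i+1}\le\lceil e_ir\rceil<e_ir+1$ into the closed form $e_i<b_{\text{MAX}}r^{\,i+1}+\frac{1-r^{\,i+1}}{1-r}$, bound $b_{\text{MAX}}r^{\,i+1}<s^{\,i+1}/\bigl((1-r)2^{iw}\bigr)$ using $b_{\text{MAX}}<2^w/(1-r)$, and set $i=t$, invoking $(t+1)u\le tw$ (so $2^{(t+1)u}\le2^{tw}$) to force $e_t$ down to its terminal value; this shows the loop has stopped by iteration $t$. For $N_{\text{MAX}}\ge t$, iterate the lower bound $e_{i+1}\ge\lfloor e_ir\rfloor>e_ir-1$ and set $i=t-1$, using the strict inequality $tu>(t-1)w$ to show $e_{t-1}\ge2$, i.e. $b_{{\text{MAX}}_{t-1}}\le b_{\text{MAX}}-2$; since $f(b-2)\ne c$ (established after Theorem \ref{t2}), the loop cannot have terminated before iteration $t$. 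Combining the two gives $N_{\text{MAX}}=t$.

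The hard part will be the constant bookkeeping in these two estimates. The additive $1$'s contributed by the floor and ceiling operations accumulate with weight $\frac{1}{1-r}=2^w/q$, which approaches $2$ when $u$ is near $w-1$, and the gap between $s=2^u\mp1$ and $2^u$ must be handled simultaneously for both signs of $q=2^w-2^u\pm1$; a naive geometric sum loses a factor of roughly $2$–$4$ and pins the count only to within $\pm1$. Closing this gap — showing the error crosses $1$ neither one step early nor one step late, and that at the crossing the loop genuinely stalls rather than taking one extra step to reach $b$ — is exactly where the precise strict/non-strict form of $tu>(t-1)w$ and $(t+1)u\le tw$ must be used, most likely together with a remainder estimate of the type $0<1+\frac{-[b_{\text{MAX}}q]_{2^w}+[-b_{{\text{MAX}}_i}s]_{2^w}}{2^w}<1$ already exploited in the proof of Theorem \ref{t3}.
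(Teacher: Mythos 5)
Your route differs from the paper's: you track the error $e_i=b_{\text{MAX}}-b_{{\text{MAX}}_i}$ through the recurrence $e_{i+1}=\lceil b_{\text{MAX}}r\rceil-\lceil(b_{\text{MAX}}-e_i)r\rceil$ and try to sandwich it with geometric bounds, whereas the paper computes the iterates \emph{exactly} in closed form, $b_{{\text{MAX}}_i}=2^w+2^u+2^{2u-w}+\cdots+2^{iu-(i-1)w}-\{1\text{ or }2\}$, so that $e_i$ is literally the tail of this sum and the two defining inequalities $tu>(t-1)w$ and $(t+1)u\leq tw$ read off directly from the exponent $ (i+1)u-iw$ of the first omitted term, with no accumulation of rounding errors. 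The difficulty you yourself identify is real and, as written, fatal to your plan: iterating $e_{i+1}<e_ir+1$ gives $e_t<b_{\text{MAX}}r^{t+1}+\frac{1}{1-r}+r^t$, and since $b_{\text{MAX}}<2^w/(1-r)$ and $\frac{1}{1-r}=2^w/q$ can approach $2$ when $u$ is near $w-1$, this upper bound can be on the order of $4$ or $5$ rather than the required $e_t\leq 1$; symmetrically, $e_{i+1}>e_ir-1$ degrades too fast to certify $e_{t-1}\geq 2$. You state candidly that the naive estimate "pins the count only to within $\pm1$" and that closing the gap is "exactly where" the precise inequalities must be used --- but that closing step is the entire content of the theorem, and your proposal offers only a conjecture about which tools might do it. A proof that locates an integer threshold exactly cannot afford additive slack of order $1/(1-r)$ per iteration; you would need either the exact expansion of $b_{\text{MAX}}=\lfloor(2^{2w}-1)/q\rfloor$ as a truncated geometric series (which is what the paper exploits) or an exact handling of the remainders $[\,\cdot\,]_{2^w}$ at every step, not just at the last one.

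A second unresolved point: even granting $e_t\leq 1$, termination at iteration $t$ requires $f(b_{{\text{MAX}}_t})=c_{\text{MAX}}$, and when $e_t=1$ it can happen that $f(b-1)=c-1\neq c$, in which case the loop runs one further iteration to reach $b$ and $N_{\text{MAX}}=t+1$. You flag this ("the loop genuinely stalls rather than taking one extra step") but give no argument; this case analysis on $v(b-1)\in\{v(b),v(b)+1\}$ from the discussion after Theorem~\ref{t2} must actually be carried out for the maximal input. So the decomposition into $N_{\text{MAX}}\leq t$ and $N_{\text{MAX}}\geq t$ is sound and the error recurrence with $\lfloor e_ir\rfloor\leq e_{i+1}\leq\lceil e_ir\rceil$ is correctly derived, but neither half of the inequality is actually established.
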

	\begin{proof}
Clearly $c_\text{MAX}=2^w-1$. From Algorithm \ref{alg}
\small
		\begin{align*}
			b_{\text{MAX}_1} &= b_{\text{MAX}_0} + c_\text{MAX}-f(b_{\text{MAX}_0}) = c_\text{MAX}-v(c_\text{MAX})\\
			&= 2^w-1-\lfloor(-1)\frac{(2^w-1)(2^u-1)}{2^w}\rfloor= 2^w+2^u-2.
		\end{align*} 
  \normalsize
By substituting the above formula into Line 3 of Algorithm 1 for $i$ iterations, it can be derived that $b_{\text{MAX}_i} = 2^w+2^u+2^{2u-w}+2^{3u-2w}+\cdots + 2^{iu-(i-1)w}-\{1\text{ or }2\}$. It can be derived that $f(b_{\text{MAX}_i})\neq c_{\text{MAX}}$ for $i<t$, where $t$ is the integer such that $tu>(t-1)w$ and $(t+1)u\leq tw$. On the other hand, $f(b_{\text{MAX}_t})= c_{\text{MAX}}$. This means that Algorithm 1 needs $t$ iterations to terminate when its input is $2^{2w}-1$.
	\end{proof}

From Theorem 3 and 4, Algorithm 1 needs at most $t$ iterations, where $t$ is the integer such that $tu>(t-1)w$ and $(t+1)u\leq tw$. For $u\leq 3/4w$, Algorithm 1 terminates in at most 3 iterations. Algorithm 1 still applies when $u<w/2$, in which case $t=1$ and Algorithm 1 reduces to the same algorithm as in \cite{HE}.

\subsection{Extension for $\lambda$ not an integer multiple of $q$} 
Assume that $\lambda = bq+r$, where $0\leq r<q$. If $r=0$, it reduces to the case covered in Subsection A, and the loop in Algorithm \ref{alg} returns a $b^*$ that equals $b$ or $b-1$ from the previous analysis. Consider the case of $0<r<q$, $\lambda<\lambda'=(b+1)q$. Feeding $\lambda'$ as the input of Algorithm \ref{alg}, the loop would return a $b^*$ that is either $b+1$ or $b$. Therefore, $b$ may have three possible values: $b^*$, $b^*+1$ or $b^*-1$. 

Since $\lambda=bq+r$, $\lambda-b^*q$ should equal $r+q$, $r$, and $r-q$ when $b^*$ is $b-1$, $b$, and $b+1$, respectively. After $b^*$ is calculated, $\lambda-b^*q$ can be computed. Since $r-q<0<r<q<r+q$, $b$ is set to $b^*-1$, $b^*$, and $b^*+1$ when $\lambda-b^*q$ is less than 0, between 0 and $q$, and larger than $q$, respectively. As a result, the quotient of dividing a $\lambda$ that is not necessarily an integer multiple of $q$ can be calculated using Algorithm \ref{algg}. 

	\begin{algorithm}[H]
		\caption{Algorithm for calculating $\lfloor b=\lambda/q\rfloor$} \label{algg}
		\hspace*{\algorithmicindent} \textbf{Input:} $\lambda$, $q=2^w-2^u\pm1$, $w$, $u$
		\begin{algorithmic}[1]
			\State $c\gets \lfloor\frac{\lambda}{2^w}\rfloor$; $i\gets 0$; $b_0 \gets c$;
			\While{$f(b_i)\neq c$}
			\State $b_{i+1}\gets b_i + (c-f(b_i))$; $i\gets i+1$; $b^*\gets b_i$;
			\EndWhile
			\State $r^*\gets \lambda-b^*q$;       $b\gets b^*$;
               \If{$r^* < 0$} $b\gets b-1$;
            \ElsIf{$r^*\geq q$} $b\gets b+1$;
               \EndIf
			\State \Return $b$;
		\end{algorithmic}
	\end{algorithm}

\section{Divider hardware implementation architectures and comparisons}

This section first proposes efficient hardware architectures to implement the proposed divider. Then comparisons with other dividers are provided.

When $u<w/2$, Algorithm 2 has one iteration in the loop and it reduces to the algorithm proposed in \cite{HE}. The hardware architecture for implementing one iteration of Line 3 of Algorithm 2 is available in \cite{HE}. $b_{i+1} = b_i + c - f(b_i)$ can be simplified as $c+\lfloor\frac{b_i(2^u\mp1)}{2^w}\rfloor+D$. From \cite{HE}, $D=0$ when $b_i(2^u\mp 1)/2^w$ is an integer and $D=1$ otherwise. However, since $b_i$ is a $w$-bit number, it does not have $w$ factors of 2. $2^u\mp1$ does not have any factor of 2. Therefore, $b_i(2^u\mp 1)/2^w$ can not be an integer, and the architecture from \cite{HE} can be simplified to the units in the dashed block in Fig. \ref{division}. $t$ copies of these units are needed to implement $t$ iterations of the loop in Algorithm 2 in a pipelined manner. The shifters align the bits in the inputs to take care of the $2^u$ multiplication and $2^w$ division. The control signal $s$ is `1' and `0' when $q=2^w-2^u+1$ and $q=2^w-2^u-1$, respectively. $cin_i$ is set to `1' when the lower $u$ bits of $b_i$ are all `0' and $s=1$ in order to eliminate the addition on unnecessary bits. More explanations on these signals can be found in \cite{HE}.

	\begin{figure}[t]
		\centering
		\includegraphics[scale=0.85]{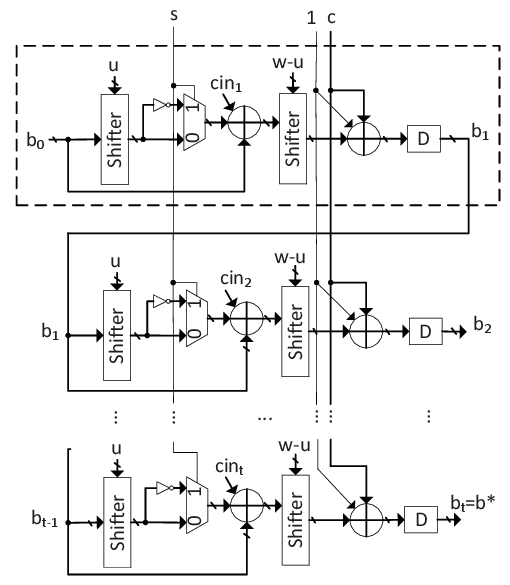}
		\caption{Architecture for calculating $b^*$ in Algorithm \ref{algg}.} \label{division}
	\end{figure}
 
From previous analysis, $\lambda-b^*q \in \{r, r\pm q\}$. Hence, the exact value of $\lambda-b^*q$ does not need to be calculated. It is sufficient to know if $\lambda-b^*q$ is negative, positive and less than $q$, or positive and larger than or equal to $q$. For $q=2^w-2^u\pm1$, the numbers that need to be added to calculate $\lambda-b^*q$ are shown in Fig. \ref{bars}(a). Since $ r< r+q<2^{w+1}$, if $\lambda-b^*q$ is positive, all its bits with weight at least $2^{w+1}$ are `0'. On the other hand, $|r-q|<2^w$. Hence, all the bits with weight at least $2^{w+1}$ are `1' in the 2's complement representation of $r-q$. Therefore, it is sufficient to tell that $\lambda-b^*q$ is negative if the $(w+2)$-th bit in $\lambda-b^*q$ is `1'. If this bit is `0', whether $\lambda-b^*q=r$ or $r+q$ can be decided by comparing the lower $w+1$ bits of $\lambda-b^*q$ with $q$. As a result, only the $w+2$ least significant bits of the numbers in Fig. \ref{bars} need to be added.

The architecture in Fig. \ref{bars}(b) computes $b$ from $b^*$. First `00' is padded to the left of the most significant bit (MSB) of $b^*$ to extend it to $(w+2)$-bit. The multiplexer on the top passes either 2's complement of $b^*$ or $b^*$ depending on whether the last number to add in Fig. \ref{bars}(a) is $-b^*$ or $+b^*$. The shifter in Fig. \ref{bars}(b) shifts the lower $w-u+2$ bits of the input to the left and pads $u$ `0's to the right. It aligns the bits from the $b^*$ number in the middle of Fig. \ref{bars}(a) for addition. Only the two LSBs of the $-b^*$ number in Fig. \ref{bars}(a) need to be added. They are padded with $w$ `0's to the right for the addition. The four numbers from Fig. \ref{bars}(a) are added up by the carry-save adder in the middle of Fig. \ref{bars}(b). The lowest $w+1$ bits of the adder output is compared with $q$. The value for $b$ is chosen based on the MSB of the adder output and the comparison result.
	
	\begin{figure}
		\centering
		\includegraphics[scale=0.95]{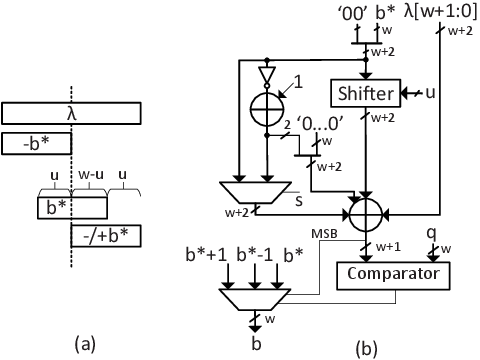}
		\caption{(a) The numbers need to be added for calculating $\lambda-b^*q$; (b) The hardware architecture to derive $b$ from $b^*$ using partial calculation of $\lambda-b^*q$.} \label{bars}
	\end{figure}
	
To further evaluate the complexity of our proposed design, it is synthesized using the Global Foundries 22FDX process for an example case of $w=32$ and $2w/3 \leq u < 3w/4$. In this case, $t=3$ copies of the units in Fig. \ref{division} are needed. Different timing constraints were tried in the synthesis, and the tightest timing constraint that does not lead to a substantial area increase is reported in Table \ref{tab1} in order to compare the minimum achievable clock period of different designs. In the proposed design, $t=3$ copies of units compute $b^*$ in three pipelining stages. After that, another clock cycle is needed to calculate $b$. 

\begin{table}[!h]
		\begin{center}\caption{Synthesis results of dividers with $w=32$ using Global Foundries 22FDX process}\label{tab1}
			\begin{tabular}{@{}c@{}||c|c|c@{}} 
				\hline
                & Timing  & Area  & Latency \\  
& constraint ($ps$)& ($\mu m^2$) & (\# of clks)\\\hline
                Design in \cite{recdiv} & $2600$ & $11168$ & $1$ \\ \hline
                Design in \cite{cra} & $1460$ & $6140$ & $1$\\ \hline 
                Proposed design ($t=3$) & $330$ & $1320$ & $4$\\\hline
			\end{tabular}
			\vspace{-1em}
		\end{center}
	\end{table} 

For comparison, the designs in \cite{recdiv} and \cite{cra} are synthesized and the results are also listed in Table \ref{tab1}. They have $2w\times 2w$ and $2w\times w$ multipliers, respectively, in their critical paths. On the other hand, the critical path of our proposed design only consists of a shifter, a $w$-bit carry-save adder, a comparator, and a few multiplexers as shown in Fig. \ref{bars}. As a result, our design achieves a much shorter clock period. The achievable improvement further increases for larger $w$. Although our design requires $t+1$ clock cycles to compute the quotient, the latency is still lower than that of the previous design due to the shorter clock period. For $w=32$, $t=1$ for $1\leq u<16$, $t=2$ for $16\leq u < 22$, and $t=3$ for $22\leq u <24$. Hence 75\% of possible $u$ leads to $t\leq 3$. Our proposed design achieves $1-330\times4/1460=9\%$, $1-330\times3/1460=32\%$ and $1-330\times2/1460=55\%$ latency reductions for $t=3$, 2, and 1, respectively, compared to the divider in \cite{cra}.

Since the designs in \cite{recdiv} and \cite{cra} consist of wide multipliers, their area requirement is much larger than that of the proposed design as shown in Table \ref{tab1}. The proposed architecture achieves 1-1320/6140=79\% area reduction compared to the design in \cite{cra}. For a $u$ corresponding to smaller $t$, fewer copies of the units in Fig. \ref{division} are utilized, and the achievable area reduction would be more significant.  
 
\section{Conclusions}
This paper proposed a low-complexity integer divider for calculating the quotient when the divisor has a small number of nonzero bits. It generalized the previous design to handle more possible divisors and the case that the dividend is not an integer multiple of the divisor. In addition, by analyzing the possible values of the intermediate results, simplifications on the hardware implementation architectures are developed. Compared to prior designs that are based on multiplying the inverse of the divisor, the proposed design reduces the area requirement to a fraction and also has much shorter latency. Future research will extend the proposed algorithm to the case that the divisor has more than three nonzero bits.


\begin{thebibliography}{00}

    \bibitem{BGV} Z. Brakerski, C. Gentry, and V. Vaikuntanathan, ``(Leveled) fully homomorphic encryption without bootstrapping," { \it Proc. of ACM Innovation in Theoretical Computer Science}, pp. 309-325, 2012.

    \bibitem{BFV} Z. Brakerski, ``Fully homomorphic encryption without modulus switching from classical GapSVP," { \it Lecture Notes in Computer Science}, vol. 7417, pp. 868-886, Springer, 2012.
 
     \bibitem{CKKS} J. H. Cheon, A. Kim, M. Kim, and Y. Song, ``Homomorphic encryption for arithmetic of approximate numbers," {\it Advances for Cryptology-ASIACRYPT}, pp. 409-437, 2017. 
    
    \bibitem{RNSCKKS} J. H. Cheon, K. Han, A. Kim, M. Kim, and Y. Song, ``A full RNS variant of the approximate homomorphic encryption," {\it Selected Areas in Cryptography}, pp. 347–368, 2018.
		\bibitem{HE} S. Akherati, X. Zhang, ``Low-complexity ciphertext multiplication for CKKS homomorphic encryption," {\it IEEE Trans. on Circuits and Syst.-II}, 2023.
		
		\bibitem{lokuptab} F. Dinechin, L. Didier, ``Table-based division by small integer constants, reconfigurable computing: architectures, tools and applications," {\it Lecture Notes in Computer Science}, vol. 7199, pp. 53-63, 2012.
		
		\bibitem{Canny} D. Kromichev, ``FPGA based Canny: advanced integer division algorithm," {\it 2021 12th National Conf. with Int. Participation}, pp. 1-4, 2021.
		
		\bibitem{edgeDet} D. Kromichev, ``FPGA based edge detection: integer division algorithm with a constant divisor," {\it 13th National Conf. with Int. Participation}, pp. 1-4, 2022.
		
		\bibitem{recdiv} D. Cavagnino and A. E. Werbrouck, ``Efficient algorithms for integer division by constants using multiplication,” {\it The Computer Journal}, vol. 51, no. 4, pp. 470-480, 2008.
		
		\bibitem{cra} T. Drane, W. -C. Cheung and G. Constantinides, ``Correctly rounded constant integer division via multiply-add," {\it IEEE Intl. Symp. on Circuits and Sys.}, pp. 1243-1246, 2012.
		
	\end{thebibliography}
\end{document}